\documentclass[11pt]{article}

\usepackage{amsfonts}
\usepackage{fancyhdr}
\usepackage{comment}
\usepackage{natbib}
\usepackage[a4paper, top=2.5cm, bottom=2.5cm, left=2.2cm, right=2.2cm]{geometry}
\usepackage{times}
\usepackage{amsmath}
\usepackage{multirow}
\usepackage{changepage}
\usepackage{colortbl}
\usepackage{amssymb}
\usepackage{graphicx}
\usepackage{tikzsymbols}
\newtheorem{theorem}{Theorem}

\newtheorem{corollary}[theorem]{Corollary}

\newenvironment{proof}[1][Proof]{\textbf{#1.} }{\ \rule{0.5em}{0.5em}}

\begin{document}

\title{Revisiting Fairness Impossibility with Endogenous Behavior}
\author{Elizabeth Maggie Penn and John W. Patty
\thanks{Departments of Political Science and Data \& Decision Sciences, Emory University, Atlanta GA, USA. Emails: \textit{elizabeth.m.penn@gmail.com}, \textit{jwpatty@gmail.com}}
}

\maketitle

\begin{abstract}
In many real-world settings, institutions can and do adjust the consequences attached to algorithmic classification decisions, such as the size of fines, sentence lengths, or benefit levels. We refer to these consequences as the \textit{stakes} associated with classification. These stakes can give rise to behavioral responses to classification, as people adjust their actions in anticipation of how they will be classified.  Much of the algorithmic fairness literature evaluates classification outcomes while holding behavior fixed, treating behavioral differences across groups as exogenous features of the environment. Under this assumption, the stakes of classification play no role in shaping outcomes.

\hspace{.15in}We revisit classic impossibility results in algorithmic fairness in a setting where people respond strategically to classification. We show that, in this environment, the well-known incompatibility between error-rate balance and predictive parity disappears, but only by potentially introducing a qualitatively different form of unequal treatment. Concretely, we construct a two-stage design in which a classifier first standardizes its statistical performance across groups, and then adjusts stakes so as to induce comparable patterns of behavior. This requires treating groups differently in the consequences attached to identical classification decisions. Our results demonstrate that fairness in strategic settings cannot be assessed solely by how algorithms map data into decisions. Rather, our analysis treats the human consequences of classification as primary design variables,     introduces normative criteria governing their use,  and shows that their interaction with statistical fairness criteria generates qualitatively new tradeoffs.    Our aim is to make these tradeoffs precise and explicit.

\end{abstract}

\bibliographystyle{apsr}

\section{Introduction}

Algorithmic systems are now used in a wide range of domains (\textit{e.g.}, criminal justice, hiring, auditing, benefits administration, \textit{etc}.).  Many of these systems are routinely evaluated with respect to statistical fairness criteria (\textit{e.g.}, error-rate balance and/or predictive parity across demographic groups).  Statistical criteria are attractive because they offer clear, ``context-free'' benchmarks.  For example, error-rate balance reflects the intuition that individuals who behave similarly should face similar chances of favorable and unfavorable decisions, while predictive parity captures the idea that a positive decision should carry the same informational meaning regardless of who receives it. While these systems vary widely in their details and use in the real world, the common statistical criteria used to evaluate them reflect a shared concern that algorithmic systems should not amplify existing inequities through their error patterns or through the meaning of their decisions.  In other words, algorithms and the data upon which they make their predictions should not unfairly disadvantage particular groups.

It is now well known that many of these fairness criteria are, unfortunately, fundamentally at odds with each other (\cite{KleinbergMullainathanRaghavan16}, \cite{Chouldechova17}).  A unifying theme of these impossibility results is that there are at least two margins that distinguish groups in ``algorithmic terms'': differences in the underlying ``prevalences'' (\textit{e.g.}, different \textit{base rates} of recidivism, default, or compliance) within the groups, and differences in the accuracy of the data collected about the groups' members (\textit{e.g.}, differences in the precision of standardized tests, differences in data collection, \textit{etc}.).  The ``impossibility'' reflected in many of these theoretical results often boils down to incompatibility of ``equalizing'' both of these margins for an arbitrary pair of groups.  Unsurprisingly, these  results have structured ensuing debates about fairness, shifting scholarly and policy attention from the question of ``which rule is best?'' to something closer to ``which tradeoff is acceptable?''

We believe that an important gap in the literature has stymied these debates.  Specifically, because the impossibility results are about \textit{statistical} notions of fairness, the context-free nature of these notions  provides little guidance for how to judge the acceptability of the trade-offs.  To make this concrete, there are many situations in which achieving error-rate balance will benefit one group, while achieving predictive parity will benefit a different group.  The statistical notions themselves are incapable of distinguishing between these two groups, because they do not specify whose errors matter, how harms scale with mistakes, or what behavioral responses would be induced by the algorithm itself.  One can see an analogy of this challenge by asking a statistician, ``which is worse, a Type-I error or Type-II error?'' 

 It is with this ambiguity in mind that we revisit these impossibility results in a setting where individuals respond strategically to algorithmic classification (\textit{i.e.}, in the presence of \textit{performativity}). In addition to making the discussion more ``social science friendly,'' incorporating strategic behavior by individuals forces us to be explicit about individuals' preferences (\textit{i.e.}, their incentives).  While this necessarily makes the resulting debate distinctly \textit{not} ``context-free,'' we believe coherence requires making the costs, benefits, and incentive effects of proposed fairness interventions explicit.   Once behavior is allowed to respond to the incentives induced by a classification algorithm, an ``escape route'' from the incompatibility identified by the impossibility results emerges.  This escape route---inducing all groups to have equal prevalences---is identified in the impossibility results themselves, and we are not the first to recognize this possibility.  However, sidestepping   statistical incompatibility does not make the underlying fairness tradeoffs disappear.  Instead, it reveals a new set of tradeoffs that only emerge once incentives are considered. More specifically, attaining joint satisfaction of statistical criteria such as error-rate balance and predictive parity requires that the algorithm offer different rewards or penalties to different groups when their members receive identical classification outcomes---a form of unequal treatment in consequences that has no analog in the standard impossibility framework. Our analysis therefore does not contradict the impossibility results, but shows how statistical and incentive-based fairness criteria interact  when behavior is endogenous to classification.

Our analysis departs from the canonical impossibility framework in two ways. First, we endogenize base rates by modeling behavior as a choice that responds to the  incentives created by classification.  This adjustment acknowledges that, in many real-world settings, individuals adjust their actions in anticipation of how they will be evaluated or classified by an algorithm (\textit{e.g.}, complying with regulations or improving one's observable characteristics). When behavior responds to classification in this way,  group prevalence becomes  an equilibrium object that is shaped by the algorithm itself. Within this model of performativity, we then explicitly treat the \textit{stakes}  of classification (\textit{e.g.}, financial rewards, fines, sentence lengths, audit intensity, \textit{etc.}) as a ``design variable'' within the definition of the algorithm itself. This move captures the reality that, in practice, many agencies, firms, and organizations can, and routinely do, adjust the severity of consequences attached to the classifications they apply to individuals.

Taken together, these departures expand the space of policy instruments beyond the classifier. Classical impossibility results ask what can be achieved by changing how signals are mapped into decisions while holding both behavior and stakes fixed. We instead ask what can be achieved when institutions are allowed to design the full incentive environment induced by classification. We construct a simple two-stage mechanism in which a classifier first standardizes its statistical performance across groups so as to equalize true and false positive rates via randomized post-processing of signals.  After achieving error-rate balance, the second stage of the mechanism adjusts the stakes of classification so as to induce similar behavior within the two groups. Under mild regularity conditions, this procedure yields identical confusion matrices across groups in equilibrium, and thus satisfies both error-rate balance and predictive parity. The cost of this  construction is that identical classification decisions may carry systematically different consequences across groups. In equilibrium, we can eliminate disparity in error rates and in decision meaning, \textit{but this requires possibly creating new disparities in the severity of consequences attached to identical decisions}.

This tradeoff highlights a tension that is often obscured in fairness debates. Error-rate balance and predictive parity are attractive precisely because they promise equal outcomes across groups: similar behavior leads to similar decisions, and similar decisions carry the same informational meaning regardless of group membership. A natural additional requirement is equal stakes: identical decisions should carry identical consequences regardless of group membership. Our results show that these goals are sometimes---but not always---in conflict, and that this conflict is distinct from the familiar tradeoffs among statistical fairness criteria. Institutions have long navigated this tension. Means-tested benefits and income-scaled penalties both reflect the principle that identical decisions may carry different consequences across groups. While the algorithmic fairness literature has largely overlooked the role of stakes as a design variable, it has recognized an analogous tension in classification rules, showing that remedying disparate impact can require differential treatment via group-specific rules. Our goal is not to resolve this tension or advocate for any particular remedy, but to identify precisely when and why equal treatment in consequences can conflict with equal statistical outcomes. We believe this is a necessary first step toward reasoned debate about when departures from these ideals are warranted.

More broadly, our contribution is to provide a tractable framework for studying feedback effects in algorithmic decision making. By modeling behavior, classification, and incentives within a unified equilibrium setting, we show how familiar fairness criteria interact once individuals' labels are no longer exogenous. This perspective complements existing work on strategic classification and performative prediction (\cite{HardtEtAl16}, \cite{PerdomoEtAl20}, \cite{PennPatty2025a}), while also moving the algorithmic fairness debate into a broader ``design space'' that explicitly accounts for the human consequences of algorithmic decisions.

\paragraph{Contributions.} We make three contributions in this article.  First, we develop a simple equilibrium framework in which group base rates are endogenous to algorithmic classification and arise from strategic responses to the incentives created by classification.  This framework departs from standard settings by treating individual behavior as a primary object of interest.  Second, within this framework,  we establish an existence result showing that the classical incompatibility between error-rate balance and predictive parity is an artifact of holding behavioral incentives fixed. Specifically, we provide a constructive two-stage mechanism demonstrating that these criteria can be jointly satisfied in equilibrium when stakes are treated as design variables, though only by potentially introducing a qualitatively different form of unequal treatment. Last, we introduce two incentive-based fairness criteria---\textit{equal stakes} and \textit{aligned incentives}---and fully characterize their interaction with error-rate balance and predictive parity. We show that while these four criteria cannot be jointly satisfied in general, there exists a broad set of circumstances in which all four \textit{are} simultaneously attainable. Together, our results show that accounting for incentive design and behavioral feedback can offer new perspectives on algorithmic fairness in strategic environments.

\section{Related Work \label{Sec:RelatedWork}} 

Our analysis relates to several connected literatures on algorithmic decision making. We build on classic    possibility and impossibility results in statistical fairness, work on post-processing and signal transformation, models of strategic classification and behavioral response, models of performative prediction and policy feedback, and recent discussions of how the severity of algorithmic decisions shapes fairness outcomes. This section situates our contribution within each of these literatures.

\subsection{Impossibility Results in Algorithmic Fairness}

A foundational line of work in algorithmic fairness establishes incompatibility results among widely used statistical fairness criteria.  \cite{KleinbergMullainathanRaghavan16} and   \cite{Chouldechova17} show that, except in degenerate cases, error-rate balance (or equalized odds) and predictive parity (or calibration-type conditions) cannot be simultaneously satisfied when groups differ in their underlying prevalences. These results are robust to allowing for randomized classifiers and post-processing, and they formalize a fundamental tension among fairness notions under fixed data-generating processes.  A key modeling assumption underlying these impossibility results is that group behavior---and hence underlying group prevalence---is exogenous and unaffected by the classifier. Behavioral differences across groups are treated as primitives of the environment rather than as objects that may respond to algorithmic decisions.

Our paper revisits these impossibility results by relaxing this central assumption. We model behavior as a direct choice over labels (\textit{e.g.}, compliance versus non-compliance), so that classification incentives shape the distribution of labels in a population. When behavior is endogenous to classification, group prevalence becomes an equilibrium object rather than a static input. The incompatibility identified by Kleinberg et al. and Chouldechova no longer binds because it is possible to manipulate behavior so as to equalize prevalence across groups.

\subsection{Possibility Results in Algorithmic Fairness}

A complementary literature asks whether the impossibility results can be circumvented by relaxing their underlying assumptions or expanding the design space. \cite{ReichVijaykumar21}    show that calibration and equal error rates can be reconciled by separately enforcing calibration on scores and equal error rates on the resulting classifiers, providing necessary and sufficient conditions for when such scores exist. \cite{hsu22}    develop a post-processing framework that approximately satisfies multiple fairness criteria simultaneously, translating tradeoffs among fairness definitions into a constrained optimization problem. 

\cite{JungEtAl20}    take a different approach, showing that when individuals' compliance choices respond endogenously to a classification rule, and when signal distributions are identical across groups, the classifier maximizing aggregate compliance (or minimizing overall crime, in their framework) naturally satisfies error-rate balance. This result emerges from the alignment between the designer's objective and the behavioral incentives the rule creates. Their result is not a direct reconciliation of error-rate balance and predictive parity, but demonstrates that endogenizing behavior can resolve apparent conflicts between classification objectives and fairness criteria. 

Our paper combines both mechanisms. We model endogenous behavior as in Jung et al., but allow for heterogeneous signal distributions across groups and treat stakes as an additional design variable. This yields a constructive existence result showing that joint satisfaction of error-rate balance and predictive parity is achievable, even when signal environments and cost distributions differ substantially across groups.

\subsection{Post-Processing and Equalized Error Rates}

\cite{HardtPriceSrebro16} show that error-rate balance  and its one-sided variant, equal opportunity, can be achieved through randomized post-processing of a fixed score. This result plays a central role in the fairness literature by demonstrating that certain fairness constraints can be imposed without retraining the underlying model or altering how information is extracted from data, beyond the information loss inherent in the constraint itself. Subsequent work has used post-processing as a standard tool for enforcing error rate constraints, often as a first step before considering additional objectives or trade offs.

Our analysis uses randomized post-processing in a similar spirit to equalize error rates across groups. However, this step is not our main contribution. Instead, it serves as a benchmark that isolates the remaining tension between error-rate balance and predictive parity under fixed behavior. The novelty of our approach lies in the subsequent stage,    {where we demonstrate that the remaining incompatibility can be overcome by adjusting the stakes of classification.}

\subsection{Strategic Classification and Unequal Costs of Response}

A growing literature studies classification when individuals respond strategically to decision rules. Early work by \cite{Dong18} models agents who manipulate features in response to a classifier.  \cite{HuEtAl19} extend this framework by allowing groups to face different costs of manipulation, highlighting how institutional inequalities can translate into disparate impacts even when a classifier is formally group-blind. \cite{MilliEtAl19} further study the welfare and social costs of strategic classification, emphasizing that robustness to manipulation can impose unequal burdens.    A complementary strand treats strategic responses as genuine behavioral change rather than feature manipulation. \cite{Shavit20}    study settings where agents' actions causally affect their true outcomes, showing that transparency can benefit decision-makers when gaming and genuine improvement are aligned.  \cite{PennPatty2025a}    extend the framework in \cite{JungEtAl20}    to characterize globally optimal classification rules under general designer objectives when behavior responds endogenously to classification, showing that while optimal rules can appear counterintuitive, they remain low-dimensional despite behavioral feedback.  Other recent work examines how fairness constraints interact with strategic behavior, showing that fairness interventions can alter incentives to manipulate features in unequal ways across groups (\cite{Zhang22}, \cite{Keswani23}). 

Like this literature, we model classification as a strategic environment in which individuals respond to the incentives created by algorithmic decisions. However, existing work on strategic classification primarily focuses on feature manipulation: individuals alter observable characteristics in order to cross a decision threshold, often with heterogeneous costs of manipulation across groups. In these models, strategic behavior affects the distribution of features conditional on label, but it does not change the underlying distribution of labels. As a result, group prevalences remain fixed, and the classical incompatibility between error-rate balance and predictive parity continues to apply.   Like \cite{JungEtAl20}  and  \cite{PennPatty2025a, PennPatty25AlgEndog}, our approach models strategic responses that change individuals' labels.  This distinction is central to our contribution: by allowing incentives to shape labels rather than just features, we show that behavioral responses can reconcile fairness criteria that are otherwise incompatible under fixed base rates. Related work on incentive-aware machine learning incorporates strategic behavioral responses to classification, and allows these responses to affect the probability of receiving a positive label  \cite{Podimata25}.  However, this literature does not study how incentive-induced changes in behavior alter group base rates, nor does it use incentive design to address the classical incompatibility between statistical fairness criteria. Our contribution is to place these behavioral responses at the center of the fairness analysis.

\subsection{Feedback, Performative Prediction, and Fairness}

Another related strand of work on performative prediction and outcome performativity (\cite{PerdomoEtAl20}, \cite{KimPerdomo23}, \cite{HardtMendler-Dunner23}) studies learning problems where model deployment changes the distribution of outcomes. These models formalize both feature and outcome performativity, whereby predictions reshape the data they aim to predict. Several papers connect performativity to fairness concerns and consider long-run or dynamic notions of fairness when interventions affect future disparities (\cite{Yin23}, \cite{Puranik22}).   Our model shares the core insight of this literature: algorithmic decisions can change the environment they are meant to predict. However, we focus on a specific and analytically tractable channel---behavioral responses to the stakes of classification---and on the implications for classical fairness trade-offs. 

In this respect our work aligns most closely with  \cite{Somerstep24}, who similarly observe that the classical incompatibility between error-rate balance and predictive parity disappears when group prevalences  are equal, and that performative feedback can be used to shape these prevalences.  Our analysis builds directly on this insight but takes a different tack. In the performative policy learning environment of Somerstep et al., the population---characterized by signal informativeness, costs of response, and prevalence rates---evolves endogenously under the policy, so that fairness can be achieved by steering the system toward a long-run state where response distributions become group-independent. Conversely, we take as given that some inequalities are persistent, structural, and not self-correcting through individual success or learning.   We treat the underlying signal informativeness and cost distributions of groups as static, and provide a constructive, two-stage mechanism that achieves fairness without relying on population drift to wash out cross-group differences. We show that this approach will yield identical confusion matrices across groups in equilibrium under mild regularity conditions, even when groups begin with very different signal distributions or cost environments. Our characterization makes the form of differential treatment that performative reform implicitly relies on explicit, and highlights  normative tradeoffs involved in using incentives as instruments of fairness.

\subsection{Consequences, Severity, and Fairness Beyond Classification}

Finally, a growing literature emphasizes that the fairness of algorithmic systems cannot be evaluated just in terms of binary decisions, and must account for the severity of the consequences those decisions impose. In many domains, algorithmic classifications determine not only whether an individual receives access or punishment, but also the magnitude of consequences such as fines, sentence lengths, loan terms, surveillance intensity, eligibility duration, or benefit levels (\cite{Eubanks18}, \cite{CorbettDaviesEtAl17}, \cite{Huq20}).  This literature reinforces the idea that the \textit{stakes} to  classification are central to the moral evaluation of algorithmic systems.   \cite{MunchBjerringMainz24} explicitly analyze the role of stakes in considering whether individuals have a right to explanation and procedural justification. They emphasize that stakes can arise not only from one-off high-impact decisions, but also from the cumulative effects of low-impact decisions.  \cite{PennPatty25AlgEndog}, in a related classification model, show that without constraints on stake-setting, classification outcomes can be channeled in any direction, highlighting the importance of normative constraints on how stakes can be set.

Related work on algorithmic reform and reparative justice emphasizes that addressing historical and structural inequalities often requires interventions that operate through differential remedies or compensatory mechanisms. This work argues that fairness may require targeted transfers, preferential access, or differential penalties designed to offset these inequities (\cite{davis21}, \cite{Binns18}). Our contribution isolates the mechanism through which such reform operates in classification settings: fairness is achieved by attaching systematically different consequences to identical classification decisions across groups. By isolating severity as a design variable, we hope to clarify some of the normative tradeoffs that are implicit in algorithmic reform in real-world settings.

\section{Model}
\label{sec:model}

We study a simple binary classification model in which individuals respond strategically to an algorithmic decision rule. The purpose of the model is not to capture any particular application in full detail, but to isolate the equilibrium interaction between classification rules, behavior, and fairness metrics in settings where structural disparities in signal noise or response costs exist.     {Throughout, we assume individuals choose their behavior in anticipation of classification. This is a natural and necessary timing assumption for any model of strategic behavior, since behavioral responses to classification are only possible if people can anticipate and respond to the classification rule. This fits settings where compliance is an ongoing or dispositional choice, such as tax compliance, regulatory adherence, and benefit eligibility, but not settings where the behavior of interest is fixed or has already occurred, such as past criminal history, medical diagnoses, or other immutable characteristics that cannot respond to classification incentives.} We hope that this simple framework can serve as a useful baseline for thinking about how to formalize strategic classification problems more generally.

\subsection{Environment\label{environment}}

There is a population of individuals indexed by $i$. Each individual belongs to a \textit{group} $g_i \in \{X,Y\}$. Group membership may affect the statistical properties of observed data and also the idiosyncratic behavioral costs faced by individuals.

Each individual chooses a binary \textit{behavior} $\beta_i \in \{0,1\}$. We interpret $\beta_i = 1$ as compliance with a norm, policy, or requirement (\emph{e.g.}, obeying the law, meeting a qualification threshold, truthfully reporting information), and $\beta_i = 0$ as non-compliance. We treat $\beta_i$ as the individual's true outcome (or ground truth label) that the algorithm seeks to predict. Choosing $\beta_i = 1$ incurs a private \textit{cost} $c_i \in \mathbb{R}$, which is observed by the individual but not by the algorithm. This cost may reflect financial, physical, psychological, legal, or social burdens associated with compliance.    {Lower values of $c_i$ correspond to lower costs of compliance, and individuals with $c_i<0$ naturally prefer compliant behavior in the absence of any extrinsic incentive from classification, while individuals with $c_i>0$ naturally prefer non-compliance. } Conditional on group membership,    {compliance} costs are independently drawn from a distribution with cumulative distribution function $H^g$ that is continuous on $\mathbb{R}$.

After the individual chooses behavior, a noisy \textit{signal}, or score, $s_i \in \mathbb{R}$ is generated. The signal is informative about behavior but imperfect. Specifically, for each group $g \in \{X,Y\}$, this signal is drawn from a behavior-dependent probability measure $f^g_{\beta}$ with cumulative distribution function $F^g_\beta$. We assume that for each group $g$, $f_1^g$ satisfies the strict monotone likelihood ratio property with respect to $f_0^g$. This simply means that higher signals always mean that it is more likely the individual chose behavior $\beta_i=1$, and is equivalent to the condition that $F_0^g(s)>F_1^g(s)$ for all signals $s$.

Last, an algorithm assigns each individual a binary \textit{decision} $d_i \in \{0,1\}$ based on the observed signal and group membership. We interpret $d_i = 1$ as a    {positive or favorable decision for person $i$ that carries direct benefits} (\emph{e.g.}, no audit, admission, approval) and $d_i = 0$ as an unfavorable decision (\emph{e.g.}, audit, rejection, denial).

\subsection{Algorithms}

Formally, an \textit{algorithm} is a (possibly randomized) mapping:
\[
\delta : \mathbb{R} \times \{X,Y\} \to [0,1],
\]
where $\delta(s,g)$ denotes the probability that an individual in group $g$ who generates signal $s$ receives decision $d=1$. We write $\delta^{g}$ for brevity, and we assume throughout that $\delta^g$ is measurable so that all expectations are well-defined. Our formulation allows the algorithm to potentially condition on group membership, but does not require it.    {We use the terms \textit{algorithm} and \textit{classifier} interchangeably.}

\subsection{Payoffs and Incentives}

Individuals care about both the classification decision they receive and the private cost incurred in choosing whether to comply. For an individual with cost $c_i$ in group $g_i$, \textit{utility} is given by:
\[
u(\beta_i,d_i,c_i,g_i) = r^{g_i} \cdot d_i - \beta_i \cdot c_i,
\]
where $r^{g}\in \mathbb{R}$ is the net benefit to an individual in group $g$ of receiving the favorable decision. We refer to $r^g$ as the \textit{stakes to classification} for group $g$. Importantly, we assume that $r^g$ may depend on group membership; this assumption allows us to consider environments in which the  consequences of a positive or negative classification  may differ across groups.\footnote{Our formulation implicitly assumes that an individual receives $r^g$ if assigned the positive classification outcome and 0 if assigned the negative outcome. Behaviorally, this is identical to the individual receiving $r^g_1$ from a positive classification outcome and $r^g_0$ from a negative classification outcome. With this latter formulation $r^g=r^g_1-r^g_0$. Consequently, $r^g$ is simply the net benefit the individual receives from positive classification, or the difference between a positive and negative classification outcome.}

Given an algorithm $\delta$ and signal distributions $F^g_\beta$, each individual chooses behavior to maximize expected utility. Because the signal is noisy and the algorithm maps signals into decisions, the expected probability of receiving a favorable decision depends on the chosen behavior, the signal distributions, the stakes to classification, and the algorithm.

\subsection{Behavioral Response and Endogenous Base Rates \label{remarkSection}}

Individuals anticipate how their behavior affects the distribution of signals and, through the algorithm, the probability of receiving a favorable decision. As a result, behavior responds strategically to the algorithm. An individual will choose to comply if compliance yields a higher expected payoff than non-compliance to the individual. Formally, compliance will be a function of the true positive and false negative rates of classification induced by an algorithm:
\begin{equation}
\begin{array}{lcr}TPR^g(\delta)&=&\int_{\mathbb{R}}\delta^g(s)f^g_1(s) ds,\\
FPR^g(\delta)&=&\int_{\mathbb{R}}\delta^g(s)f^g_0(s) ds.
\end{array}
\label{errors}
\end{equation}

\noindent With these terms in hand, individual $i$ belonging to group $g^i=g$ will choose to comply if and only if:
$$r^{g}\cdot TPR^{g}(\delta)-c_i\geq r^{g}\cdot FPR^{g}(\delta).$$

\noindent Consequently, for any algorithm $\delta$ and group $g$, optimal behavior takes a threshold form. There exists a group-specific cutoff:
\begin{equation}\label{eqThresh} \hat{c}^g(\delta)\equiv r^g\cdot\left(TPR^{g}(\delta)-FPR^{g}(\delta)\right),\end{equation}
such that individuals in group $g$ choose $\beta_i = 1$ if and only if: $$c_i \leq \hat{c}^g(\delta).$$ Given a distribution of costs $H^g(t)$, we define:
\begin{equation}\label{eqComply}
\pi^g(\delta) \equiv H^g(\hat{c}^g(\delta))
\end{equation}
as the \textit{equilibrium base rate} or \textit{prevalence} of compliance in group $g$ induced by algorithm $\delta$.

  In light of Equation \ref{eqComply}, two points are of note. First, if $TPR^{g}(\delta)=FPR^{g}(\delta)$, then equilibrium compliance will equal $H^g(0)$. In this case our classifier performs no better than random assignment, and individuals face no extrinsic incentive to comply. We term $H^g(0)$ \textit{sincere prevalence for group $g$}. If $TPR^{g}(\delta)>FPR^{g}(\delta)$, then $\delta$ performs strictly better than random assignment. We term such a classifier \textit{informative for group $g$}. If a classifier is informative for every group $g$, we refer to it simply as \textit{informative}.

Finally, we note that equilibrium base rates $\pi^g(\delta) $ are endogenous: they depend on the classifier itself, along with the signal distributions and the stakes of classification, $r^g$. This endogeneity is central to our analysis. Measures of algorithmic fairness are typically evaluated by conditioning on observed behaviors or outcomes. Here, both the distribution of behavior and outcomes are shaped by the algorithm. We now turn to fairness criteria in this setting.

\section{Fairness Criteria with Endogenous Behavior}

We start by formalizing several criteria of fair classification and interpreting them in a setting where behavior responds strategically to classification. We adopt standard statistical notions of fairness---error-rate balance and predictive parity---but note that these criteria are evaluated at equilibrium, with group prevalence determined endogenously by the classifier and the incentives it creates. We'll also distinguish between statistical fairness criteria that are evaluated on classification outcomes, and normative constraints on the design of incentives (we term these constraints \textit{equal stakes} and \textit{aligned incentives}). This distinction allows us to isolate the role of classification stakes  in    {reconciling statistical} fairness trade-offs.

\begin{itemize}
    \item \textbf{Error-rate balance} (\textit{i.e.}, equal false positive and false negative rates across groups) requires that the algorithm induce equal behavioral incentives across groups: individuals who differ only by group membership face identical tradeoffs when deciding whether to comply:
\[
\mathbb{E}[d|\beta=1, g]=\mathbb{E}[d|\beta=1, g^\prime]\,\,\,\text{ and }\,\,\,\mathbb{E}[d|\beta=0, g]=\mathbb{E}[d|\beta=0, g^\prime]\text{ for }g\not=g^\prime.
\]

In our framework, $$\mathbb{E}[d|\beta=b, g]=\int_{\mathbb{R}}\delta^{g}(s)f_\beta^g(s)ds,$$ and error-rate balance is equivalent to the requirement that: $$TPR^{g}(\delta^g)=TPR^{g^\prime}(\delta^{g^\prime})\,\,\,\text{ and }\,\,\,FPR^{g}(\delta^g)=FPR^{g^\prime}(\delta^{g^\prime})\text{ for }g\not=g^\prime.$$ In the previous section we showed that the true positive and false positive rates induced by a classifier are the primary objects governing equilibrium behavior. As shown in Equation \ref{eqThresh}, individuals' incentives to comply depend only on the difference between these rates and the stakes to classification. Error rates play a central role in \textit{determining} equilibrium prevalence but are defined independently of group prevalence---they depend only on the classifier and the signal structure.

    \item \textbf{Predictive parity} (\textit{i.e.}, equal calibration across groups) requires that the algorithm equalize the information content of decisions after classification:
    $$\mathbb{E}[\beta|d=1,g]=\mathbb{E}[\beta|d=1, g^\prime]\text{ for }g\not=g^\prime.$$
    
In our framework, 
\[
\mathbb{E}[\beta|d=1, g]=\frac{\pi^g(\delta^g)\cdot TPR^{g}(\delta^{g})}{\pi^g(\delta^g)\cdot TPR^{g}(\delta^{g})+(1-\pi^g(\delta^g))\cdot FPR^{g}(\delta^{g})},
\]
where $\pi^g(\delta^g)$ is the equilibrium base rate of compliance induced by algorithm $\delta^g$: $$\pi^g(\delta^g)=H^g\big(r^g\cdot (TPR^{g}(\delta^{g})-FPR^{g}(\delta^{g}))\big).$$

Unlike error-rate balance, predictive parity \textit{does} depend on group behavior.  In our setting, prevalence is determined in equilibrium by the incentives an algorithm induces. While predictive parity can, in principle, be achieved through classifier design, doing so is generally incompatible with simultaneously equalizing error rates when groups differ in  prevalence. Addressing this tension will require us to control the behavioral responses that determine equilibrium prevalence.

    \item \textbf{Equal stakes} (\textit{i.e.}, equal consequences of classification across groups) requires that the net rewards and penalties to classification be the same across groups:
    $$r^g=r^{g^\prime} \text{ for }g\not=g^\prime.$$
    
    Equal stakes requires that identical classification decisions carry the same consequences across groups. This condition rules out using differential incentives as a fairness instrument, and serves as a benchmark that helps us isolate what can be achieved through classifier design alone.
    
    \item \textbf{Aligned incentives} (\textit{i.e.}, positive behavioral consequences of classification) requires that the classifier not strictly disincentivize compliant behavior for any group:
    $$\pi^g(\delta^g)\geq H^g(0)\text{ for all groups }g.$$
    
Aligned incentives requires that classification weakly increases compliance within each group relative to baseline behavior. This condition rules out mechanisms that satisfy statistical criteria by penalizing compliance. Consequently, it ensures that classification does not create perverse behavioral incentives in equilibrium.\footnote{
For informative classifiers (those for which $TPR^g(\delta)>FPR^g(\delta)$) aligned incentives rules out setting stakes $r^g<0$. These negative stakes would strictly disincentivize non-compliance by effectively rewarding non-compliance or penalizing compliance.}

\end{itemize}

    Error-rate balance and predictive parity capture two distinct fairness criteria that are widely invoked in algorithmic decision making. Error-rate balance reflects the idea that people who behave the same way should face the same chance of favorable and unfavorable decisions, regardless of their group membership. Predictive parity reflects the complementary concern that the informational meaning of a favorable decision should be the same across groups, so that receiving a positive classification outcome conveys the same information regardless of who receives it. Together, they ensure that a classifier doesn't penalize certain groups through asymmetric error patterns or devalue decisions by attaching different informational content to the same outcomes. 
    
Equal stakes and aligned incentives reflect normative constraints on how incentives can be used when behavior responds to classification. Equal stakes captures the concern that the same algorithmic decisions should not carry systematically different human consequences across groups. Aligned incentives imposes a directional constraint, requiring that classification encourage rather than discourage compliant behavior. Together, they  restrict the use of incentives to achieve statistical fairness by ruling out differential rewards or punishment, or by encouraging socially harmful behavior.

\section{Satisfying Statistical Fairness Criteria by Adjusting Stakes \label{resultSection}}

The well-known impossibility theorems of \cite{KleinbergMullainathanRaghavan16} and \cite{Chouldechova17} show that when groups have different  base rates of compliance it is impossible to design a classification rule that simultaneously attains predictive parity and error-rate balance. Our framework suggests a potential workaround by differentially manipulating behavioral responses to classification in order to equalize prevalence across groups. Theorem \ref{impossible} shows that, under mild conditions, whenever the severity of consequences can be adjusted across groups, it is always possible---regardless of differences in signal noise or compliance costs---to construct a classification rule that attains both error-rate balance and predictive parity in equilibrium.    {While all formal proofs are contained in Appendix} \ref{Sec:Proofs},    {our proof strategy is to construct this rule in two-stages, exploiting the mechanics of how classification affects behavior. First, because error rates depend only on the classifier and signal structure---and not on group prevalence---they can be equalized through signal post-processing without referencing equilibrium behavior or incentives. We can then adjust the stakes of classification to shape the prevalence of compliance that those error rates induce in equilibrium. This second stage construction isolates the tradeoffs that must be made in order to achieve both error-rate balance and predictive parity via incentives.}

\begin{theorem}
\label{impossible} 
For any two groups $X$ and $Y$, there exists an informative classification rule $\delta$ and a system of classification stakes $(r^X, r^Y)$ satisfying aligned incentives such that predictive parity and error-rate balance are jointly satisfied in equilibrium.
\end{theorem}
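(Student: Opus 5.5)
The construction has two stages. Stage one fixes common error rates by post-processing. Stage two chooses stakes so that both groups end up with the same prevalence. With equal error rates and equal prevalences the two groups have identical confusion matrices, so predictive parity follows at once from the formula for $\mathbb{E}[\beta|d=1,g]$.

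\textbf{Stage one (error-rate balance).} For each group $g$, consider the achievable set of $(FPR,TPR)$ pairs over all measurable $\delta^g$. By the Neyman--Pearson lemma, its upper boundary is the ROC curve traced by threshold rules $\delta^g(s)=\mathbf{1}\{s\geq t\}$ (randomizing at the threshold if there are atoms). The achievable set is convex because we may randomize over rules, and it contains the diagonal through trivial coin-flip rules. Strict MLRP gives $F^g_0(s)>F^g_1(s)$ for every $s$. Taking any single threshold $t$, the pair $(1-F^g_0(t),\,1-F^g_1(t))$ lies strictly above the diagonal for both groups simultaneously. The intersection of the two achievable sets is therefore a convex set containing a point strictly above the diagonal. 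I fix one such common pair $(\phi,\tau)$ with $0<\phi<\tau<1$. A concrete choice is to take the group $g$ with the lower ROC at some point, use its threshold rule, and for the other group mix its threshold rule with the trivial rule $\delta\equiv\phi$ (or move along the diagonal) to land exactly on $(\phi,\tau)$, in the manner of \cite{HardtPriceSrebro16}. The resulting $\delta$ is informative for both groups and satisfies $TPR^X=TPR^Y=\tau$ and $FPR^X=FPR^Y=\phi$.

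\textbf{Stage two (equal prevalence via stakes).} Let $\Delta=\tau-\phi>0$. Then $\pi^g=H^g(r^g\Delta)$. Aligned incentives requires $r^g\geq 0$, equivalently $\pi^g\geq H^g(0)$. I target a common prevalence $p^*$ with $\max\{H^X(0),H^Y(0)\}\leq p^*$ and $p^*<1$. The natural choice is $p^*=\max_g H^g(0)$, assuming this is strictly less than $1$. Each $H^g$ is continuous and nondecreasing, with $H^g(r\Delta)\to 1$ as $r\to\infty$. By the intermediate value theorem there is therefore some $r^g\geq 0$ with $H^g(r^g\Delta)=p^*$. For the group attaining the maximum we may take $r^g=0$; if strictly positive stakes are preferred, choose any $p^*$ in $(\max_g H^g(0),1)$ instead. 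With these stakes $\pi^X=\pi^Y=p^*$. Substituting into the predictive-parity expression gives $\mathbb{E}[\beta|d=1,g]=p^*\tau/(p^*\tau+(1-p^*)\phi)$, which is the same for both groups. Because $p^*\tau>0$, the denominator is positive, so the expression is well defined.

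\textbf{Main obstacle.} The delicate step is stage one. We need a common pair with strict inequality $\tau>\phi$, and we need to check that it is attainable by a measurable rule for each group. Strict MLRP supplies the strict separation.

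The degenerate case where some $H^g(0)=1$ needs care, since then no $p^*<1$ is available. In that case I would set $p^*=1$ and reach it with finite stakes only if the other group's cost distribution is bounded above. Otherwise I would interpret the theorem's "mild conditions" as excluding this case, for example by requiring $H^g(0)<1$. I expect to state that assumption explicitly rather than resolve it in full generality.
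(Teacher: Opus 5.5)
Your proposal is correct and follows essentially the paper's two-stage construction. You reach a common informative pair $(\phi,\tau)$ by randomized post-processing, using ROC convexity and mixing with a constant rule where the paper uses an explicit two-level rule with $\mathcal{F}=1/2$. You then set zero stakes for the group with higher sincere prevalence and solve $H^g(r^g\Delta)=\max_{g'}H^{g'}(0)$ for the other group, exactly as the paper does.

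Your caveat about $\max_g H^g(0)=1$ is well placed. The paper's choice $r^Y=(H^Y)^{-1}(H^X(0))/(\mathcal{T}-\mathcal{F})$ silently fails in that same case, so the paper's argument implicitly relies on the same assumption.
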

\noindent Theorem \ref{impossible} shows that, with endogenous behavior, unequal classification stakes are always sufficient to reconcile error-rate balance and predictive parity in equilibrium without sacrificing aligned incentives. The following corollary  characterizes the situations in which unequal stakes are also \textit{necessary} to reconcile these fairness goals.

\begin{corollary} Fix any informative classification rule that induces error-rate balance across groups. If the distribution of compliance costs for group $Y$ stochastically dominates that of group $X$ then predictive parity can be attained only by allowing the stakes of classification to differ across groups. Otherwise, there exists a classification rule and a system of stakes satisfying equal stakes, error-rate balance, and predictive parity, though potentially at the cost of violating aligned incentives.
\label{corollary}
\end{corollary}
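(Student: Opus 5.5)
The plan is to reduce predictive parity to equality of equilibrium prevalences, and then read the corollary off the two cost distributions $H^X$ and $H^Y$.

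\emph{Step 1: reduce to equal prevalences.} Fix an informative rule $\delta$ with error-rate balance. Write $T \equiv TPR^X(\delta^X)=TPR^Y(\delta^Y)$, $F \equiv FPR^X(\delta^X)=FPR^Y(\delta^Y)$ and $\Delta \equiv T-F>0$. Then
\[
\mathbb{E}[\beta\mid d=1,g]=\frac{\pi^g T}{\pi^g T+(1-\pi^g)F}.
\]
When $F>0$, this expression is strictly increasing in $\pi^g$. Hence, under error-rate balance, predictive parity holds if and only if $\pi^X(\delta^X)=\pi^Y(\delta^Y)$, that is, if and only if
\[
H^X(r^X\Delta)=H^Y(r^Y\Delta).
\]
I will treat the degenerate case $F=0$ separately. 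In that case the positive predictive value equals one for both groups, so parity holds trivially. The corollary's necessity claim should therefore be read for rules with $F>0$, and I would state this explicitly. This is the one point where I expect the argument to need care about how ``stochastic dominance'' and degenerate rules are interpreted.

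\emph{Step 2: necessity under dominance.} Impose equal stakes $r^X=r^Y=r$. As $r$ ranges over $\mathbb{R}$, the product $t=r\Delta$ ranges over all of $\mathbb{R}$, because $\Delta>0$. So predictive parity with equal stakes is achievable if and only if $H^X(t)=H^Y(t)$ for some real $t$. I will read ``the cost distribution of $Y$ stochastically dominates that of $X$'' in the strict sense, $H^Y(t)<H^X(t)$ for all $t$. Under that reading no such $t$ exists, so any pair of stakes attaining predictive parity must have $r^X\neq r^Y$. The same argument applies with the roles of the groups reversed.

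\emph{Step 3: sufficiency otherwise.} Suppose neither group strictly dominates the other. Then $D(t)\equiv H^X(t)-H^Y(t)$ either vanishes somewhere or takes both signs. $D$ is continuous because $H^X$ and $H^Y$ are continuous, so in the second case the intermediate value theorem gives a $t^*$ with $D(t^*)=0$.

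For the rule itself, I take the error-rate-balanced informative classifier produced in the first stage of the proof of Theorem~\ref{impossible}, which equalizes error rates by randomized post-processing. I then set the common stakes $r^X=r^Y=t^*/\Delta$. This gives $\pi^X=\pi^Y=H^X(t^*)$, so equal stakes, error-rate balance and predictive parity all hold.

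Finally, aligned incentives requires $r\Delta\ge 0$, i.e.\ $t^*\ge 0$, and nothing guarantees this. If every crossing point of $H^X$ and $H^Y$ lies below zero, then $r<0$ and aligned incentives fails. This is exactly the ``potentially at the cost of violating aligned incentives'' qualification in the statement.
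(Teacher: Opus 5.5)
Your proposal is correct and follows the paper's own proof essentially step for step. Under error-rate balance, predictive parity reduces to $\pi^X=\pi^Y$ by monotonicity of the positive predictive value. Strict dominance ($H^Y<H^X$ everywhere) then rules out any common stake solving $H^X(r(\mathcal{T}-\mathcal{F}))=H^Y(r(\mathcal{T}-\mathcal{F}))$, and otherwise a crossing point $\overline{c}$ gives equal stakes $r=\overline{c}/(\mathcal{T}-\mathcal{F})$, with aligned incentives at risk when $\overline{c}<0$. Your extra care tightens points the paper leaves implicit: you note that the positive predictive value is strictly increasing in prevalence only when $\mathcal{F}>0$ (the paper's ``whenever $\mathcal{T}>\mathcal{F}$'' glosses over this), and you use the intermediate value theorem to produce the crossing.
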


   {Corollary} \ref{corollary}    {establishes that equal stakes, error-rate balance, and predictive parity can be jointly satisfied whenever the two groups' cost distributions are not stochastically ordered. Whether aligned incentives can also be preserved depends on an additional condition: all four criteria can be simultaneously satisfied if and only if there exists} $\bar{c}\geq 0$    {such that} $H^X(\bar{c})=H^Y(\bar{c})$.    {In words, the two groups' cost CDFs must cross at a positive value. The condition} $\bar{c}\geq 0$    {is what ensures that equal stakes can induce equal prevalence at a positive compliance threshold, preserving aligned incentives. This condition is not knife-edged: if two cost distributions satisfy this condition, then any sufficiently small perturbation of either distribution will preserve a crossing near} $\bar{c}$.    {The set of cost environments in which all four criteria can be simultaneously satisfied therefore has positive measure.}

\subsection{Discussion\label{discussSection}}

Theorem \ref{impossible} and Corollary \ref{corollary}  highlight the tradeoffs that arise once algorithmic stakes are treated as design variables. Theorem \ref{impossible} shows that error-rate balance, predictive parity, and aligned incentives can be jointly achieved, but only by potentially relaxing equal stakes, allowing the consequences of identical decisions to differ across groups. Corollary \ref{corollary} characterizes the situations in which a violation of equal stakes is necessary in order to satisfy other fairness goals. When one group's cost distribution first-order stochastically dominates another's---a standard way of formalizing structural disadvantage across groups---predictive parity and error-rate balance can never be jointly realized under equal stakes. When cost distributions are \textit{not} ordered by stochastic dominance, equal stakes can always be preserved, but potentially at the expense of aligned incentives, requiring incentive schemes that penalize compliant behavior.     {As an existence proof, the mechanism we construct to achieve joint satisfaction of fairness goals prioritizes analytical tractability over optimality. Our construction demonstrates what is achievable, but it is not intended as a prescriptive recommendation for any particular application.}

A related distinction is also worth drawing. While individual behavior depends on both the consequences of a positive or negative classification ($r$) along with the costs of behavioral change ($c_i$), the policy instruments that generate these costs and consequences are not interchangeable. In our framework, we implement differential incentives through the stakes attached to algorithmic decisions, rather than through direct interventions that reshape private costs of compliance, which may be unobservable or infeasible to alter. This distinction matters: adjusting the severity of classification outcomes is a natural and widely used policy lever, whereas equalizing underlying cost distributions across groups would require more challenging forms of intervention. 

Finally, throughout we have assumed that people value the favorable classification outcome independently of its accuracy. In other words, people prefer $d=1$ regardless of whether that decision correctly reflects their behavior. These preferences generate the compliance incentives at the heart of our proof of Theorem \ref{impossible}.    We could alternatively think of a more general class of preferences in which people prefer different kinds of classification outcomes (such as valuing accurate classification). While we leave richer preference specifications to future work, we note that for many different preference environments the resulting compliance incentives can similarly be expressed in terms of the true and false positive rates induced by an algorithm, and our proof strategy would carry through. A full characterization of when all four fairness criteria can be jointly satisfied under more general preferences is left to future work.

\section{Illustrating Fairness Tradeoffs via Stakes}

   {We now present three examples to illustrate how adjusting the stakes of classification can enable joint satisfaction of statistical fairness criteria that are otherwise incompatible.} In the first, group $X$'s costs of compliance are stochastically dominated by those in group $Y$, and    {achieving both} error-rate balance and predictive parity requires violating equal stakes. In the second and third examples, neither group stochastically dominates the other, and we are always able to simultaneously satisfy error-rate balance, predictive parity, and equal stakes. However, whether we can also attain aligned incentives depends on the specifics of the group cost distributions and whether equal-stakes fairness is achieved by ``lifting people up'' through increased incentives or ``leveling people down'' via the suppression of compliant behavior.  In the examples that follow we term a group $A$ \textit{advantaged} and $B$ \textit{disadvantaged} if sincere compliance---the fraction of the population with negative costs of compliance---is strictly higher in $A$ than in $B$.

\paragraph{Example 1: Fairness via differential treatment.}
Let compliance costs for group $Y$ be distributed $N[1, 1]$ and those for group $X$ be distributed $N[0, 1]$, so that members of $Y$ face uniformly higher costs of compliance ($Y$ is disadvantaged relative to $X$). After post-processing signals to equalize error rates, let $\mathcal{T}$ be the true positive rate for both groups and $\mathcal{F}$ be the false positive rate.\footnote{$\mathcal{T}$ and $\mathcal{F}$ are constructed in Appendix \ref{Sec:Proofs}. For the purposes of our examples we simply require that they exist and that $\mathcal{T}>\mathcal{F}$, so that our classifier is informative (which our construction guarantees).} Finally, let $\mathcal{E}=\mathcal{T}-\mathcal{F}>0$ be the difference in these true positive and false positive rates. 

By Equation \ref{eqThresh}, we know that individuals in group $g$ with costs below cutoff $\hat{c}^g$ will comply, with $\hat{c}^g=r^g\cdot \mathcal{E}.$
In this case, equal stakes necessarily induces lower equilibrium compliance in group $Y$, and hence lower positive predictive value. Predictive parity therefore cannot be achieved under equal stakes. Restoring predictive parity requires offering stronger incentives to disadvantaged group $Y$. Setting $r^X=0$ and $r^Y=\frac{1}{\mathcal{E}}$ yields identical prevalence across the groups of $50\%$. Aligned incentives is also satisfied, as prevalence in $X$ is unchanged from sincere prevalence and prevalence in $Y$ increases from 16\% to 50\%. 

\begin{figure}[h!]
\centering
\includegraphics[width=3in]{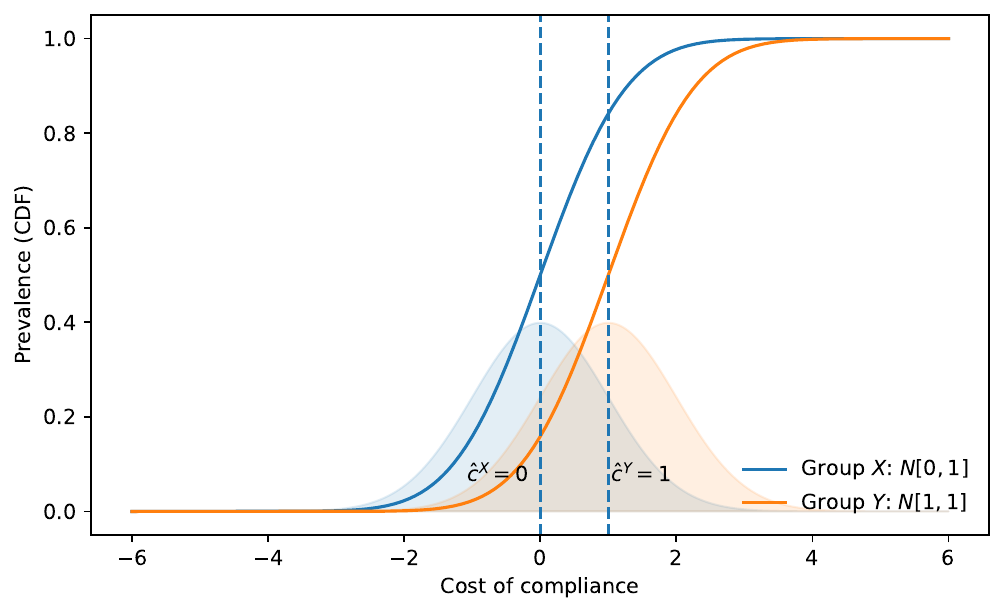}
\caption{Equalizing prevalence requires differential stakes.}
\label{ex1Fig}
\end{figure}

Figure \ref{ex1Fig} depicts the cumulative distribution functions and underlying probability densities of compliance costs for the two groups, along with the equilibrium cutoffs in costs, $\hat{c}^g$ below which members of $g$ will choose to comply.  In this case, one group's cost distribution stochastically dominates the other's, so equal stakes necessarily induce lower equilibrium compliance for the disadvantaged group. This (global) structural disadvantage makes some form of differential treatment unavoidable if statistical fairness is to be achieved.

\paragraph{Example 2: Fairness via lifting up.}
Now let compliance costs for group $X$ be distributed $N[0, 2]$ and those for group $Y$ be distributed $N[1, 1]$, so that $X$ is advantaged relative to $Y$. In this case, the CDFs of the groups' cost distributions uniquely cross at $\overline{c}=2$. By setting stakes equal to $r^X=r^Y=\frac{2}{\mathcal{E}}$ we can equalize prevalence across groups to $H^X(2)=H^Y(2)=84\%$. Here, we are able to simultaneously satisfy all four fairness criteria, demonstrating that the ability to satisfy all four criteria is not knife-edged. Figure \ref{ex2Fig} again depicts the CDFs and underlying PDFs of compliance costs for the two groups.

\begin{figure}[h!]
\centering
\includegraphics[width=3in]{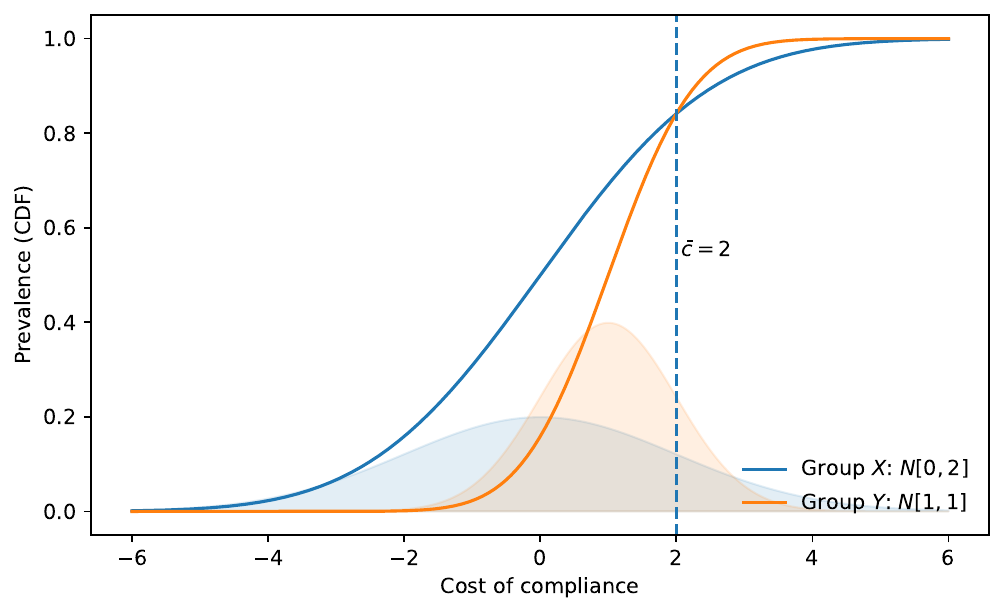}
\caption{Equalizing prevalence requires lifting up disadvantaged group $Y$.}
\label{ex2Fig}
\end{figure}

In this example, inequality is concentrated among a subset of higher-cost individuals in disadvantaged Group $Y$ centered at $c=1$. With sufficiently strong incentives, group $Y$ is lifted into equal compliance with $X$, with the resulting behavioral changes driven primarily by increased compliance within $Y$. However, our next example demonstrates that equal stakes may also require a form of leveling down.

\paragraph{Example 3: Fairness via leveling down.}
Now let compliance costs for group $X$ be distributed $N[0, 2]$ and those for group $Y$ be distributed $N[-1, 1]$.  In this case $Y$ is the advantaged group, and the CDFs of the groups' cost distributions uniquely cross at $\overline{c}=-2$. Here, equal stakes and predictive parity can be jointly achieved only by  ``flipping the stakes,'' or setting a negative $r^X=r^Y=-\frac{2}{\mathcal{E}}$ that \textit{penalizes} compliance rather than rewarding it.  A consequence is that equilibrium compliance falls below its sincere level in both groups, to $16\%$---a violation of aligned incentives. Again, Figure \ref{ex3Fig}  depicts the CDFs and underlying PDFs of the cost distributions for the two groups.

\begin{figure}[h!]
\centering
\includegraphics[width=3in]{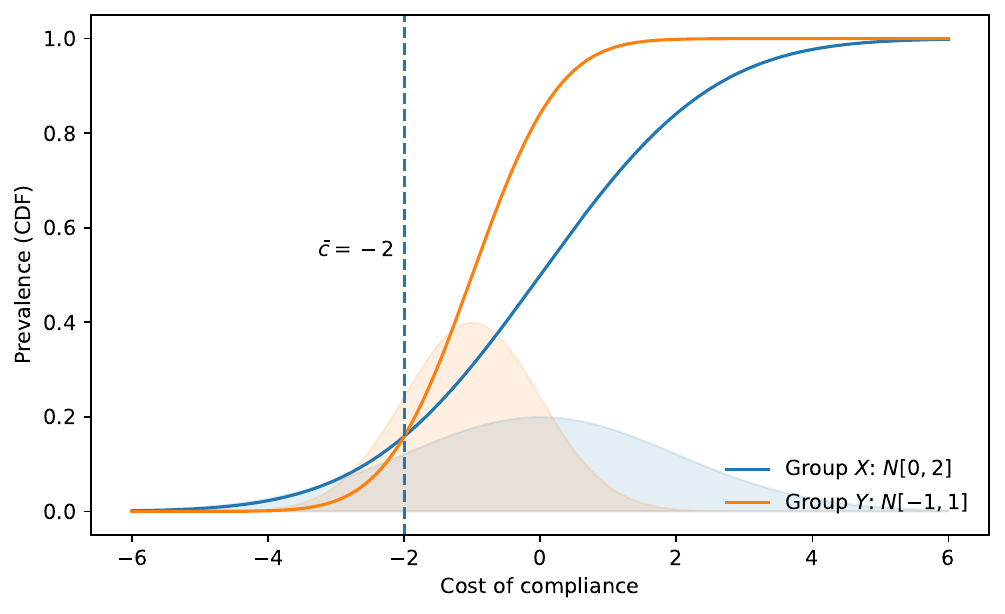}
\caption{Equalizing prevalence requires leveling down advantaged group $Y$.}
\label{ex3Fig}
\end{figure}

In this example inequality is concentrated among a low-cost, advantaged subset of individuals in Group $Y$ centered at $c=-1$.  Preserving equal stakes comes at the cost of suppressing compliance for everyone, with the largest behavioral change now borne by the advantaged Group $Y$.

\section{Conclusion}
\label{sec:conclusion}

We've developed a tractable framework for evaluating fairness interventions in settings where individuals respond strategically to algorithmic classification and where these responses depend on the stakes attached to classification outcomes. Such settings include environments structured by fines, benefits, sentences, eligibility rules, and institutional penalties borne by people. We show that in these environments,    {incompatibilities between statistical notions of fairness can be overcome} by expanding the set of design variables under consideration. When behavior responds endogenously to classification, and the stakes of classification are allowed to vary, classifiers can always be designed to satisfy error-rate balance and predictive parity, along with a behavioral fairness criterion we term ``aligned incentives.'' However, this comes at the cost of shifting disparity away from the \textit{distribution} of classification outcomes and toward the \textit{consequences} attached to those outcomes.

When fairness is pursued in strategic environments, designers must confront how behavioral incentives are structured and who bears the burden of behavioral change. This requires explicit choices about whether identical decisions should carry identical consequences across groups (equal stakes), whether classification should weakly encourage socially desirable behavior (aligned incentives), and whether to enforce statistical criteria such as error-rate balance and predictive parity.  Our results show that these objectives are not jointly attainable in general. However, we also show that in some environments---specifically, when group cost distributions cross at positive values---it \textit{is} possible to satisfy all four criteria simultaneously. Importantly, this compatibility is not knife-edged: it holds over an open set of cost distributions. Consequently, while tradeoffs between these fairness criteria are real, they are not inevitable.

Our framework deliberately models stakes in their simplest form, as the net payoff difference between positive and negative classification outcomes. This minimalism isolates the behavioral channel through which fairness criteria interact, but raises a new set of important normative questions. How rewards and penalties are structured, whether gains and losses are treated symmetrically, and how the distributive consequences of classification fall across people and groups are all questions our framework is designed to accommodate but does not resolve. A goal of ours has been to provide a flexible framework in which future work considering punishment severity, asymmetry,  and the distributive consequences of algorithmic systems can be undertaken via the equilibrium analysis of behavior.

Finally, our results connect directly to the disparate treatment/disparate impact distinction familiar from antidiscrimination law and the algorithmic fairness literature. A recurring theme in this literature is that remedying disparate impact often requires disparate treatment via rules that condition on group membership (\cite{CorbettDaviesEtAl17}).    Our results show that this same tension extends to the consequences attached to decisions, not just the decision rules themselves. Specifically, we characterize exactly when requiring equal stakes (an equal treatment condition) necessarily produces disparate impact in equilibrium outcomes. \textit{Griggs v. Duke Power Co.} (1971) holds that a formally neutral rule requires justification when it produces disparate impact, and acknowledges that disparate treatment may be warranted as a remedy in such cases. Our framework does not resolve whether differential stakes are ever the right remedy---such a conclusion would depend on institutional goals, legal constraints, and moral commitments that lie outside the scope of any one model. But by identifying precisely when and why equal stakes produce disparate impact, we hope to contribute to the informed debate that such judgments require.

\bibliography{john-impossibility}

\newpage
\appendix
\section{Proofs \label{Sec:Proofs}}

\noindent \textbf{Theorem \ref{impossible}}
\textit{For any two groups $X$ and $Y$, there exists an informative classification rule $\delta$ and a system of classification stakes $(r^X, r^Y)$ satisfying aligned incentives such that predictive parity and error-rate balance are jointly satisfied in equilibrium.}

\ \\ \begin{proof}
We consider two groups $g\in\{X, Y\}$ with signal distributions $f_1^g$, $f_0^g$ and cost distributions $H^g$. The proof proceeds in two steps. In Step 1 we will construct a group-specific classifier, $\delta^g$, that will guarantee that  true and false positive rates---$TPR^g(\delta^g)$ and $FPR^g(\delta^g)$---are equalized across groups, and consequently that error-rate balance is satisfied. This classifier introduces controlled randomization in its final decision to equalize these error rates. 

Because group-level incentives to choose $\beta_i=1$ depend on the product $r^g\cdot (TPR^g(\delta^g)-FPR^g(\delta^g))$, manipulating base rates via stakes $r^g$ requires the classifier to generate error rates such that $TPR^g(\delta^g)\not=FPR^g(\delta^g)$. Our construction in Step 1 ensures that $TPR^g(\delta^g) >FPR^g(\delta^g)$, so that our classifier is informative.  It's known that equal error rates can be attained via randomized post-processing of scores (Hardt et al., 2016). We provide a simple construction that demonstrates exactly how this post-processing interacts with endogenous behavior and incentive design.

In Step 2 we  adjust the stakes to classification, $r^g$, for each group so that induced prevalence is the same in both groups. This adjustment equalizes base rates of compliance across groups while preserving the common error rates established in Step 1. It then follows that error-rate balance and predictive parity are simultaneously satisfied by classifiers $\delta^g$ and reward structure $r^g$. Moreover, the induced confusion matrices will be identical across groups.

\ \\ \textbf{Step 1.}
For each group $g$, pick a score threshold $s^g$ with $0<F^g_1(s^g)<F^g_0(s^g)<1$.\footnote{Such a signal $s^g$ must exist because $F^X$ and $F^Y$ possess full support on $\mathbb{R}$.}  Define two numbers: $$\ell^g=F^g_0(s^g)\,\,\,\text{ and }\,\,\,m^g=F^g_1(s^g).$$ In words, $\ell^g$ is the probability a type $\beta=0$ from group $g$ scores below $s^g$, and $m^g$ is the probability a type $\beta=1$ from group $g$ scores below $s^g$. We have $m^g<\ell^g$ by the MLRP. Now we create two group-specific classifiers over the signals that will depend on (to be defined) constants $a^g, b^g\in[0, 1]$:

\begin{equation}\delta^g(s)=\left\{\begin{array}{ll}
a^g&\text{ if }s<s^g,\\
b^g&\text{ if }s\geq s^g.\\
\end{array}\right. \label{possibleClass}\end{equation}
We must now choose $a^g, b^g$ to equalize error rates across groups. Note that for each group our true positive rates  (TPR) and false positive rates (FPR) are:

$$\begin{array}{rclc}
\mathbb{E}[\delta^g(s)|\beta=1]&=&a^g\cdot m^g+b^g\cdot(1-m^g),&\hspace{.5in}(TPR^g(\delta^g))\\
\mathbb{E}[\delta^g(s)|\beta=0]&=&a^g\cdot \ell^g+b^g\cdot(1-\ell^g).&\hspace{.5in}(FPR^g(\delta^g))\\
\end{array}$$
To  equalize these rates across groups we fix $TPR^g(\delta^g)=\mathcal{T}$ and  $FPR^g(\delta^g)=\mathcal{F}$ and solve:
$$\begin{array}{rcl}
\mathcal{T}&=&a^g\cdot m^g+b^g\cdot(1-m^g),\\
\mathcal{F}&=&a^g\cdot \ell^g+b^g\cdot(1-\ell^g),\\
\end{array}$$
yielding the unique solution: \begin{equation}\label{possibleParameters}\begin{array}{rcl}
a^g&=&\dfrac{\mathcal{F}\cdot(1-m^g)-\mathcal{T}\cdot(1-\ell^g)}{\ell^g-m^g},\\
&&\\
b^g&=&\dfrac{\mathcal{T}\cdot \ell^g-\mathcal{F}\cdot m^g}{\ell^g-m^g}.
\end{array}\end{equation} It remains to show that there are feasible choices of $\mathcal{T}, \mathcal{F}$ that ensure $a^g, b^g\in[0, 1]$, so that our classifier maps signals into probabilities over decisions.  We accomplish this by constructing a particular choice of $\mathcal{T}$ and $\mathcal{F}$ that is both informative ($\mathcal{T}>\mathcal{F}$) and feasible ($a^g, b^g\in[0, 1]$), setting:
\begin{equation}\begin{array}{rcl}
\mathcal{F}&=&\dfrac{1}{2},\\
&&\\
\mathcal{T}&=&\dfrac{1}{2}+\min\limits_{G}\left(\dfrac{\ell^g-m^g}{2\cdot \max\{\ell^g, 1-\ell^g\}}\right).\\
\end{array}\label{possibleErrors}\end{equation}
It's straightforward to check that these choices result in $a^g, b^g\in[0, 1]$ by letting: $$\Delta=\min\limits_{G}\left(\dfrac{\ell^g-m^g}{2\cdot \max\{\ell^g, 1-\ell^g\}}\right).$$ Then we have: $$0<\Delta\,\,\,\text{ and }\,\,\,\Delta\leq \left(\dfrac{\ell^g-m^g}{2\cdot \ell^g}\right)\,\,\,\text{ and }\,\,\,\Delta\leq \left(\dfrac{\ell^g-m^g}{2\cdot (1-\ell^g)}\right).$$ Plugging into $a^g$ and $b^g$ yields:
  $$\begin{array}{rcl}
a^g&=&\dfrac{1}{2}-\Delta \dfrac{1-\ell^g}{\ell^g-m^g},\\
&&\\
b^g&=&\dfrac{1}{2}+\Delta \dfrac{\ell^g}{\ell^g-m^g},
\end{array}$$ which satisfy $a^g, b^g\in[0, 1]$ by inspection. We can conclude that when $\delta^g, \mathcal{T}, \mathcal{F}, a^g$, and $b^g$ are defined as in Equations \ref{possibleClass}, \ref{possibleParameters} and \ref{possibleErrors}, both groups obtain classification outcomes conditional on behavior that equalize true positive and false positive error rates, so that error-rate balance is satisfied.

\ \\ \textbf{Step 2.} In Step 1 we applied a group-specific randomization of the raw signals to equalize  error rates across groups. In order to achieve predictive parity, we must now equalize rates of compliance across groups. By Equations \ref{eqThresh} and \ref{eqComply} we know that equilibrium compliance as a function of the classifier, or $\pi^g(\delta^g)$, depends on the true and false positive rates induced by $\delta^g$ ($\mathcal{T}$ and  $\mathcal{F})$, group-specific costs of compliance ($H^g$), and the net reward to a positive classification ($r^g$), with equilibrium compliance being:
$$\pi^g(\delta^g)=H^g\left(r^g\cdot(\mathcal{T}-\mathcal{F})\right).$$ Our construction of $\mathcal{T}$ and $\mathcal{F}$ in Step 1 ensured that $\mathcal{T}>\mathcal{F}$. Without loss of generality, let group $X$ have higher baseline prevalence, so that $H^X(0)\geq H^Y(0)$. We set $r^X=0$. For group $Y$ set: $$r^Y=\frac{\left(H^Y\right)^{-1}\left(H^X(0)\right)}{\mathcal{T}-\mathcal{F}}, $$ with $\left(H^Y\right)^{-1}$ the generalized inverse of $H^Y$. Because  $H^Y$ is continuous, such an inverse always exists (it need not be unique if $H^Y$ has ``flat spots''; in this case any selection from $(H^Y)^{-1}$ works for the construction). It follows that at classifier $\delta^g$ defined in Step 1, and setting $r^X=0$ and $r^Y$ as defined above, we have equalized prevalence across groups to $\pi$, with:
$$\pi= H^Y\left(r^Y\cdot(\mathcal{T}-\mathcal{F})\right)=H^X(0).$$ 

To summarize,  by Step 1 we designed $\delta^g$  to equalize error rates across groups. By Step 2 we equalized induced prevalence rates via differential rewards. Predictive parity requires equal positive predictive value across groups, or: $$\mathbb{E}\left[\beta|\delta^X\right]=\mathbb{E}\left[\beta|\delta^Y\right].$$ We have chosen $\delta^g$ and $r^g$ to ensure:

$$\mathbb{E}\left[\beta|\delta^g\right]=\frac{\pi\cdot\mathcal{T}}{\pi\cdot\mathcal{T}+(1-\pi)\cdot\mathcal{F}},$$ and because $\pi$, $\mathcal{T}$, and $\mathcal{F}$ are group-invariant,  this expression is identical across groups. Consequently, predictive parity is satisfied.    {Finally, we note that our construction sets} $r^X=0$    {for simplicity. Setting} $r^X$    {to any positive value and adjusting} $r^Y$    {accordingly yields an equivalent result, as prevalence equalization depends only on the relationship between} $r^X$    {and} $r^Y$    {rather than their absolute levels.}
\end{proof}

\ \\ \textbf{Corollary \ref{corollary}} \textit{Fix any informative classification rule that induces error-rate balance across groups. If the distribution of compliance costs for group $Y$ stochastically dominates that of group $X$ then predictive parity can be attained only by allowing the stakes of classification to differ across groups. Otherwise, there exists a classification rule and a system of stakes satisfying equal stakes, error-rate balance, and predictive parity, though potentially at the cost of violating aligned incentives.}

\ \\ \begin{proof}
Under error-rate balance, both groups share the same true- and false-positive rates $\mathcal{T}$ and $\mathcal{F}$, with $\mathcal{T} >\mathcal{F}$ (as our classifier is informative). For group $g$, we define the positive predictive value of a classifier as:
\[
\mathrm{PPV}^g
= \text{Pr}\left(\beta_i=1|d_i=1, g\right)
= \frac{\pi^g(\delta^g) \mathcal T}{\pi^g(\delta^g) \mathcal T + (1-\pi^g(\delta^g))\mathcal F},
\]
which is strictly increasing in equilibrium prevalence $\pi^g(\delta^g)$ whenever
$\mathcal T>\mathcal F$. Hence predictive parity holds if and only if $\pi^X(\delta^X)=\pi^Y(\delta^Y)$.

In equilibrium, prevalence is given by:
\[
\pi^g = H^g\big(r^g(\mathcal T-\mathcal F)\big).
\]
If equal stakes is satisfied, so that $r^X=r^Y=r$, then predictive parity requires:
\begin{equation}\label{PPV}
H^X\big(r\cdot(\mathcal T-\mathcal F)\big)
= H^Y\big(r\cdot(\mathcal T-\mathcal F)\big).
\end{equation}
If $H^Y$ strictly stochastically dominates $H^X$ then $H^Y(c)<H^X(c)$ for all costs $c$. Consequently Equation \ref{PPV} cannot be satisfied, and attaining predictive parity will require $r^Y>r^X$. 

Conversely, if  group cost distributions cannot be ordered via stochastic dominance then there exists at least one $\overline{c}$ with $H^X(\overline{c})=H^Y(\overline{c})$ and predictive parity can be achieved via equal stakes, assigning: $$r=\frac{\overline{c}}{\mathcal{T}-\mathcal{F}}.$$ However, if $\overline{c}<0$ this necessitates a violation of aligned incentives.
\end{proof}

\end{document}